\newtheorem{thm}{Theorem}
\newtheorem{lem}{Lemma}
\newtheorem{defn}{Definition}
\newtheorem{cons}{Construction}
\newcommand{\f}{{\mathbb F}}
\newcommand{\fq}{{\mathbb F}_{q}}
\newcommand{\e}{{\epsilon}}
\pgfplotsset{compat=1.3}
\tikzstyle{help lines}=[black!20,dashed]
\begin{document}
\title{Construction of Locally Repairable Array Codes with Optimal Repair Bandwidth under the Rack-Aware Storage Model}
	


\author{%
  \IEEEauthorblockN{Yumeng Yang, Han Cai and Xiaohu Tang}
  \IEEEauthorblockA{ Information Security and National Computing Grid Laboratory\\
                    Southwest Jiaotong University\\
                    Chengdu, China\\
                    Email: yangyumeng@my.swjtu.edu.cn, hancai@swjtu.edu.cn, xhutang@swjtu.edu.cn}          
}

\maketitle

\begin{abstract}
 In this paper, we discuss codes for distributed storage systems with hierarchical repair properties. Specifically, we devote attention to the repair problem of the rack-aware storage model with locality, aiming to enhance the system's ability to repair a small number of erasures within each rack by locality and efficiently handling a rack erasure with a small repair bandwidth. By employing the regenerating coding technique, we construct a family of array codes with $(r,u-r+1)$-locality, where the $u$ nodes of each repair set are systematically organized into a rack. When the number of failures is less than $u - r + 1$, these failures can be repaired without counting the system bandwidth. In cases where the number of failures exceeds the locality, the failed nodes within a single rack can be recovered with optimal cross-rack bandwidth.
\end{abstract}

\section{Introduction}

With the development of the information technology and artificial intelligence, the question of how to store data has become increasingly crucial. Erasure coding has been introduced to address the challenges of large-scale storage in distributed systems, offering high fault-tolerance capabilities with significantly less redundancy compared to traditional replication methods, such as the well-known {\it Maximum Distance Separable}  (MDS) codes, which provide optimal failure tolerance and support for minimal storage overhead. However, when node failures occur in this storage system, traditional erasure codes suffer from challenges related to excessive bandwidth requirements for recovering the failed nodes.

To minimize data transmission during the repair process, Dimakis et al. \cite{DGW+10} introduced {\it regenerating codes}, aiming for an optimal tradeoff between storage and bandwidth. This approach involves contacting more surviving nodes but downloading only partial content from each, leading to a significant improvement compared to traditional schemes that necessitate downloading full data from helper nodes, referring to \cite{DGW+10,li2018generic,li2022pmds,holzbaur2020partial,guruswami2017repairing,CB19,MBW18,CV21,BBD+22,RSK+11,YB17,YB19,ZW19} as examples.
Simultaneously, another strategy to enhance repair efficiency is the utilization of {\it locally repairable codes}, which decrease the number of nodes accessed by the repair schemes to simplify the repair process, referring to \cite{gopalan2012locality,prakash2012optimal,cai2020optimal,cai2018optimal,chen2020improved,wang2014repair,xing2019construction,rawat2013optimal,TaBaFr16bounds,jin2019construction,mazumdar2014update,zeh2016bounds} as examples. In general, the reduction in the number of connected nodes results in a substantial decrease in repair bandwidth.

For a locally repairable code with $(r, \delta)$-locality, if the number of failed nodes is less than $\delta$, these failures can be easily recovered by leveraging the MDS property within the repair set. However, if $\delta$ or more failures occur simultaneously in a repair set, the locality breaks down.
Recently, Cai {\it et al.} \cite{CMS+23} combined locally repairable codes and regenerating codes to address erasure patterns where the number of failures exceeds the locality. They considered a practical scenario in which nodes in each repair set may be located in adjacent physical positions, forming what can be seen as a rack. Following the bandwidth assumption of the rack-aware model,  the cut-set bound is established for the rack-aware system with locality.

In this paper, we consider locally repairable codes under the rack-aware storage model, where each local repair set is organized into a rack. Inspired by regenerating codes, we generalize the Tamo-Barg code into an array form, resulting in a family of locally repairable array codes. The proposed code accommodates any number of failures in a single rack and achieves the optimal repair bandwidth for erasures beyond the local repair property. Furthermore, in comparison to traditional rack-based codes, this construction can support a broader range of erasure patterns with small repair bandwidth.

\section{Preliminaries}\label{sec: pre}

First of all,  we introduce some notation and repair models involved in this paper.
For any positive integers $a<b$, denote by $[a]$ the set $\{1,2,\cdots,a\}$ and $[a,b]$ the set $\{a,\cdots,b\}$. Let $q$ be a prime power and $\fq$ be a finite field of $q$ elements.
Denote by $\fq[x]$ the ring of polynomials over $\fq$. We represent ${\bm f}({\bm X})=(f^{(0)}(x_0),\cdots,f^{(l-1)}(x_{l-1}))^\top$ as an $l$-length vector consisting of polynomials, where ${\bm f}=(f^{(0)},f^{(1)},\ldots,f^{(l-1)})^\top$, $f^{(i)}\in \fq[x]$ for $0\leq i\leq l-1$ and  $\bm{X}=(x_0,\cdots,x_{l-1})^\top\in \fq^l$.

In this paper, we consider the array form of codes with locality. Herein, we give the formal definition.
\begin{defn}[Locally Repairable Array Code, \cite{gopalan2012locality, prakash2012optimal}]\label{def: array_lrc}
For $i\in[n]$, the $i$-th code symbol of an $(n,k;l)$ linear array code $\mathcal{C}$ has $(r,\delta)$ locality if there exists a subset $S_{i}\subseteq [n]$ (repair set) such that
\begin{enumerate}
\item[(1)] $i\in S_{i}$ and $|S_{i}|\leq r+\delta-1$;
\item[(2)] the minimum distance of the punctured code $\mathcal{C}|_{S_{i}}$ is at least $\delta$.
\end{enumerate}
\end{defn}
Furthermore, the code $\mathcal{C}$ is said to be a locally repairable code with $(r,\delta)$-locality if all the code symbols have $ (r, \delta)$-locality.

We employ Reed-Solomon (RS) codes as the building block of MDS array codes to design locally repairable array codes, whose definition is given below.

\begin{defn}[Generalized Reed-Solomon Code, \cite{MS77}]\label{Def_RS}
Let $A=\{\alpha_{1},\cdots,\alpha_{n}\}$ be the set of distinct elements over $\fq$. A Generalized Reed-Solomon code of length $n$ and dimension $k$ with evaluation points $A$ is defined as
\begin{equation*}
\begin{split}
{\rm GRS}(n,k,A,\boldsymbol{\nu})=&\{(\nu_{1}f(\alpha_{1}),\cdots,\nu_{n}f(\alpha_{n})):\\
&\hspace{2cm} f\in\fq[x],deg(f)<k\},
\end{split}
\end{equation*}
where $ \boldsymbol{\nu}=(\nu_{1},\cdots,\nu_{n})\in(\fq^{*})^n$.
\end{defn}
When $ \boldsymbol{\nu}=(1,1,\cdots,1)$, this code is called a Reed-Solomon code denoted as ${\rm RS}(n,k,A)$ for short.

\begin{defn}[Dual Code, \cite{MS77}]\label{def:dual}
Let $\mathcal{C}$ be a linear code over $\fq$ of length $n$. The dual code $\mathcal{C}^{\bot}$ of  code $\mathcal{C}$  is defined as
$$\mathcal{C}^{\bot}=\{(c'_{1},c'_{2},\cdots,c'_{n}):\sum_{i=1}^{n}c'_{i}c_{i}=0,\,\,\,\,\forall (c_{1},\cdots,c_{n})\in\mathcal{C}\}.$$
\end{defn}

The dual of an RS code is a GRS code. Precisely, $({\rm RS}(n,k,A))^{\bot}={\rm GRS}(n,n-k, A, \boldsymbol{\nu})$, where $\nu_{i}=\Pi_{j\neq i}(\alpha_{i}-\alpha_{j})^{-1}$, $i\in[n]$. Therefore, this RS code can also be defined by the parity-check equations
\begin{equation*}
\sum_{i=1}^{n}\alpha_{i}^{l}\nu_{i}c_{i}=0, \,\,\,\,\,l\in[0,n-k-1].
\end{equation*}
Since the constant multiplier $\nu_{i}$ is well-defined by the evaluation points $A$, we omit it for simplicity in the subsequent discussion.

\subsection{System Models}
We illustrate the connections and differences in storage models involved in this work. To begin with, we introduce two common storage models.

 Assume that a file $M$ is divided into $k$ blocks, encoded into $n$ blocks, and then placed in $n$ different storage nodes. 
\begin{itemize}
\item[$\bullet$]Homogeneous storage model
\end{itemize}

For the research of distributed storage codes, most studies have concentrated on the {\it homogeneous storage model}, in which nodes are distributed uniformly in different locations. As shown in Fig. $1$, if a failure occurs in the node $c_1$, $k\leq d<n$ surviving nodes (referred to as helper nodes) send $\beta_{1},\cdots,\beta_{d}$ symbols to a new replacement node respectively to repair the failed symbol. The repair bandwidth counts as the data transmission from all the helper nodes to the replacement node, i.e., $b=\sum_{i=1}^{d}\beta_{i}$.

\vspace{-0.3cm}

\tikzset{every picture/.style={line width=0.75pt}} 
\begin{figure}[ht]\label{fig: homo_model}\begin{center}
\begin{tikzpicture}[x=0.5pt,y=0.5pt,yscale=-1,xscale=1]

\draw   (79.58,517.22) -- (130.16,517.22) -- (130.16,544.84) -- (79.58,544.84) -- cycle ;
\draw   (79.58,557.22) -- (130.16,557.22) -- (130.16,584.84) -- (79.58,584.84) -- cycle ;
\draw   (79.58,596.22) -- (130.16,596.22) -- (130.16,623.84) -- (79.58,623.84) -- cycle ;
\draw   (79.58,655.22) -- (130.16,655.22) -- (130.16,682.84) -- (79.58,682.84) -- cycle ;
\draw [color={rgb, 255:red, 208; green, 2; blue, 27 }  ,draw opacity=1 ]   (79.58,517.22) -- (130.16,544.84) ;
\draw [color={rgb, 255:red, 208; green, 2; blue, 27 }  ,draw opacity=1 ]   (79.58,544.84) -- (130.16,517.22) ;
\draw   (194.58,579.22) -- (245.16,579.22) -- (245.16,606.84) -- (194.58,606.84) -- cycle ;
\draw    (130,570.5) .. controls (169.98,553.92) and (202.35,561.12) .. (219.7,577.24) ;
\draw [shift={(221,578.5)}, rotate = 225] [fill={rgb, 255:red, 0; green, 0; blue, 0 }  ][line width=0.08]  [draw opacity=0] (12,-3) -- (0,0) -- (12,3) -- cycle    ;
\draw    (130,609.5) .. controls (170.53,631.7) and (202.68,616.64) .. (219.26,608.37) ;
\draw [shift={(221,607.5)}, rotate = 153.43] [fill={rgb, 255:red, 0; green, 0; blue, 0 }  ][line width=0.08]  [draw opacity=0] (12,-3) -- (0,0) -- (12,3) -- cycle    ;
\draw    (130,667.5) .. controls (162.51,678.34) and (213.44,645.51) .. (220.7,609.16) ;
\draw [shift={(221,607.5)}, rotate = 99.21] [fill={rgb, 255:red, 0; green, 0; blue, 0 }  ][line width=0.08]  [draw opacity=0] (12,-3) -- (0,0) -- (12,3) -- cycle    ;
\draw    (180,548.5) .. controls (167,578.5) and (189,646.5) .. (229,645.5) ;

\draw (98.61,523.72) node [anchor=north west][inner sep=0.75pt]  [font=\footnotesize] [align=left] {$\displaystyle c_{1}$};
\draw (98.61,563.72) node [anchor=north west][inner sep=0.75pt]  [font=\footnotesize] [align=left] {$\displaystyle c_{2}$};
\draw (98.61,600.72) node [anchor=north west][inner sep=0.75pt]  [font=\footnotesize] [align=left] {$\displaystyle c_{3}$};
\draw (98.61,661.72) node [anchor=north west][inner sep=0.75pt]  [font=\footnotesize] [align=left] {$\displaystyle c_{n}$};
\draw (112.53,629.21) node [anchor=north west][inner sep=0.75pt]  [font=\large,rotate=-90.04]  {$...$};
\draw (213.61,585.72) node [anchor=north west][inner sep=0.75pt]  [font=\footnotesize] [align=left] {$\displaystyle c_{1}$};
\draw (221.61,629.72) node [anchor=north west][inner sep=0.75pt]  [font=\footnotesize] [align=left] {$\displaystyle d$};
\draw (143.61,546) node [anchor=north west][inner sep=0.75pt]  [font=\footnotesize] [align=left] {$\displaystyle \beta _{1}$};
\draw (150.61,604) node [anchor=north west][inner sep=0.75pt]  [font=\footnotesize] [align=left] {$\displaystyle \beta _{2}$};
\draw (156.61,646) node [anchor=north west][inner sep=0.75pt]  [font=\footnotesize] [align=left] {$\displaystyle \beta _{d}$};
\draw (181.31,627.21) node [anchor=north west][inner sep=0.75pt]  [font=\large,rotate=-90.04]  {$...$};
\end{tikzpicture}
\vspace{-0.3cm}
\caption{ Homogeneous Storage Model}\end{center}
\end{figure}
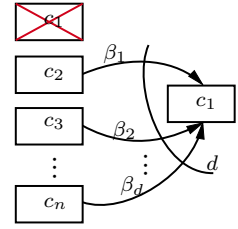

\begin{itemize}
\item[$\bullet$]Rack-aware storage model
\end{itemize}

In this model, $n$ storage nodes are divided into $\bar{n}$ groups of size $u$ and distributed to $\bar{n}$ different racks (refer to Fig. $2$).  Since nodes within the same rack have a significantly lower transmission cost than the inter-rack transmission, the intra-rack transmission is considered to be free. Assume that the node $c_{1,1}$ located in Rack $1$ fails, the nodes in $\bar{d}<\bar{n}$ helper racks first send data to a special relayer in each rack, then those relayers respectively transmit $\bar{\beta}_{1},\cdots,\bar{\beta}_{\bar{d}}$ symbols to the failed rack, and the remaining $u-1$ nodes in the failed rack transmit $\alpha_{2},\cdots,\alpha_{u}$ symbols to the new replacement node. Since the storage model only counts the data transmission between the racks, the repair bandwidth is equal to $\bar{b}=\sum_{i=1}^{\bar{d}}\bar{\beta}_{i}$.

\tikzset{every picture/.style={line width=0.75pt}} 
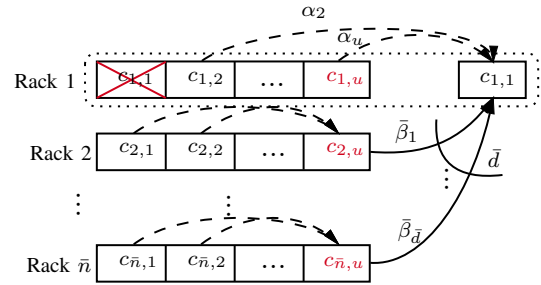
\begin{figure}[ht]\label{fig: rack_model}\begin{center}

\begin{tikzpicture}[x=0.5pt,y=0.5pt,yscale=-1,xscale=1]

\draw   (106.11,47.57) -- (312.58,47.57) -- (312.58,75.19) -- (106.11,75.19) -- cycle ;
\draw    (158.25,47.57) -- (158.25,74.85) ;
\draw    (210.39,47.57) -- (210.39,74.85) ;
\draw    (262.54,47.57) -- (262.54,74.85) ;
\draw   (106.11,102.14) -- (312.58,102.14) -- (312.58,129.88) -- (106.11,129.88) -- cycle ;
\draw    (158.25,102.14) -- (158.25,129.42) ;
\draw    (210.39,102.14) -- (210.39,129.42) ;
\draw    (262.54,102.14) -- (262.54,129.42) ;
\draw   (106.11,186.65) -- (312.58,186.65) -- (312.58,214.27) -- (106.11,214.27) -- cycle ;
\draw    (158.25,186.65) -- (158.25,213.94) ;
\draw    (210.39,186.65) -- (210.39,213.94) ;
\draw    (262.54,186.65) -- (262.54,213.94) ;
\draw [color={rgb, 255:red, 208; green, 2; blue, 27 }  ,draw opacity=1 ]   (106.11,75.19) -- (158.25,47.9) ;
\draw [color={rgb, 255:red, 208; green, 2; blue, 27 }  ,draw opacity=1 ]   (106.11,47.9) -- (158.25,75.19) ;
\draw   (380,47.65) -- (430.58,47.65) -- (430.58,75.27) -- (380,75.27) -- cycle ;
\draw  [dash pattern={on 4.5pt off 4.5pt}]  (289,47) .. controls (328.2,17.6) and (381.81,24.7) .. (404.64,46.64) ;
\draw [shift={(406,48)}, rotate = 226.27] [fill={rgb, 255:red, 0; green, 0; blue, 0 }  ][line width=0.08]  [draw opacity=0] (12,-3) -- (0,0) -- (12,3) -- cycle    ;
\draw  [dash pattern={on 4.5pt off 4.5pt}]  (185.58,47.14) .. controls (224.98,17.59) and (401.59,12.16) .. (405.92,46.41) ;
\draw [shift={(406,48)}, rotate = 271.59] [fill={rgb, 255:red, 0; green, 0; blue, 0 }  ][line width=0.08]  [draw opacity=0] (12,-3) -- (0,0) -- (12,3) -- cycle    ;
\draw    (313,116) .. controls (353.39,118.96) and (381.16,109.3) .. (404.92,76.51) ;
\draw [shift={(406,75)}, rotate = 125.22] [fill={rgb, 255:red, 0; green, 0; blue, 0 }  ][line width=0.08]  [draw opacity=0] (12,-3) -- (0,0) -- (12,3) -- cycle    ;
\draw    (313,200) .. controls (353.8,202.99) and (391.62,147.56) .. (405.79,76.08) ;
\draw [shift={(406,75)}, rotate = 101] [fill={rgb, 255:red, 0; green, 0; blue, 0 }  ][line width=0.08]  [draw opacity=0] (12,-3) -- (0,0) -- (12,3) -- cycle    ;
\draw  [dash pattern={on 4.5pt off 4.5pt}]  (134,102) .. controls (173.2,72.6) and (263.3,78.73) .. (287.6,100.64) ;
\draw [shift={(289,102)}, rotate = 226.27] [fill={rgb, 255:red, 0; green, 0; blue, 0 }  ][line width=0.08]  [draw opacity=0] (12,-3) -- (0,0) -- (12,3) -- cycle    ;
\draw  [dash pattern={on 4.5pt off 4.5pt}]  (185,102) .. controls (224.2,72.6) and (265.32,78.73) .. (287.66,100.64) ;
\draw [shift={(289,102)}, rotate = 226.27] [fill={rgb, 255:red, 0; green, 0; blue, 0 }  ][line width=0.08]  [draw opacity=0] (12,-3) -- (0,0) -- (12,3) -- cycle    ;
\draw  [dash pattern={on 4.5pt off 4.5pt}]  (185,186) .. controls (224.2,156.6) and (265.32,162.73) .. (287.66,184.64) ;
\draw [shift={(289,186)}, rotate = 226.27] [fill={rgb, 255:red, 0; green, 0; blue, 0 }  ][line width=0.08]  [draw opacity=0] (12,-3) -- (0,0) -- (12,3) -- cycle    ;
\draw  [dash pattern={on 4.5pt off 4.5pt}]  (134,186) .. controls (173.2,156.6) and (263.3,162.73) .. (287.6,184.64) ;
\draw [shift={(289,186)}, rotate = 226.27] [fill={rgb, 255:red, 0; green, 0; blue, 0 }  ][line width=0.08]  [draw opacity=0] (12,-3) -- (0,0) -- (12,3) -- cycle    ;
\draw    (363,91.5) .. controls (361,123.5) and (378,137.5) .. (413,133.5) ;
\draw  [dash pattern={on 0.84pt off 2.51pt}] (97.18,49.55) .. controls (97.18,45.13) and (100.76,41.55) .. (105.18,41.55) -- (432,41.55) .. controls (436.42,41.55) and (440,45.13) .. (440,49.55) -- (440,73.55) .. controls (440,77.96) and (436.42,81.55) .. (432,81.55) -- (105.18,81.55) .. controls (100.76,81.55) and (97.18,77.96) .. (97.18,73.55) -- cycle ;

\draw (41.54,54.38) node [anchor=north west][inner sep=0.75pt]  [font=\large] [align=left] {{\footnotesize Rack 1}};
\draw (209.31,144.65) node [anchor=north west][inner sep=0.75pt]  [font=\large,rotate=-90.04]  {$...$};
\draw (53.54,109.22) node [anchor=north west][inner sep=0.75pt]  [font=\large] [align=left] {{\footnotesize Rack 2}};
\draw (50.99,194.1) node [anchor=north west][inner sep=0.75pt]  [font=\large] [align=left] {{\footnotesize Rack $\displaystyle \bar{n}$}};
\draw (95.53,143.64) node [anchor=north west][inner sep=0.75pt]  [font=\large,rotate=-90.04]  {$...$};
\draw (119.86,53.43) node [anchor=north west][inner sep=0.75pt]  [font=\footnotesize] [align=left] {$\displaystyle c_{1,1}$};
\draw (172.78,53.43) node [anchor=north west][inner sep=0.75pt]  [font=\footnotesize] [align=left] {$\displaystyle c_{1,2}$};
\draw (277.61,53.43) node [anchor=north west][inner sep=0.75pt]  [font=\footnotesize,color={rgb, 255:red, 208; green, 2; blue, 27 }  ,opacity=1 ] [align=left] {$\displaystyle c_{1,u}$};
\draw (120.86,107.33) node [anchor=north west][inner sep=0.75pt]  [font=\footnotesize] [align=left] {$\displaystyle c_{2,1}$};
\draw (120.86,192.16) node [anchor=north west][inner sep=0.75pt]  [font=\footnotesize] [align=left] {$\displaystyle c_{\bar{n} ,1}$};
\draw (172.78,107.33) node [anchor=north west][inner sep=0.75pt]  [font=\footnotesize] [align=left] {$\displaystyle c_{2,2}$};
\draw (277.61,107.33) node [anchor=north west][inner sep=0.75pt]  [font=\footnotesize,color={rgb, 255:red, 208; green, 2; blue, 27 }  ,opacity=1 ] [align=left] {$\displaystyle c_{2,u}$};
\draw (172.78,192.16) node [anchor=north west][inner sep=0.75pt]  [font=\footnotesize] [align=left] {$\displaystyle c_{\bar{n} ,2}$};
\draw (275.61,193.16) node [anchor=north west][inner sep=0.75pt]  [font=\footnotesize,color={rgb, 255:red, 208; green, 2; blue, 27 }  ,opacity=1 ] [align=left] {$\displaystyle c_{\bar{n} ,u}$};
\draw (226.73,61.31) node [anchor=north west][inner sep=0.75pt]  [font=\large,rotate=-359.95]  {$...$};
\draw (227.73,114.21) node [anchor=north west][inner sep=0.75pt]  [font=\large,rotate=-359.95]  {$...$};
\draw (227.73,199.04) node [anchor=north west][inner sep=0.75pt]  [font=\large,rotate=-359.95]  {$...$};
\draw (393.61,54.16) node [anchor=north west][inner sep=0.75pt]  [font=\footnotesize] [align=left] {$\displaystyle c_{1,1}$};
\draw (374.31,124.65) node [anchor=north west][inner sep=0.75pt]  [font=\large,rotate=-90.04]  {$...$};
\draw (399.61,112.16) node [anchor=north west][inner sep=0.75pt]  [font=\footnotesize] [align=left] {$\displaystyle \bar{d}$};
\draw (329.61,90.72) node [anchor=north west][inner sep=0.75pt]  [font=\footnotesize] [align=left] {$\displaystyle \bar{\beta }_{1}$};
\draw (331.61,165) node [anchor=north west][inner sep=0.75pt]  [font=\footnotesize] [align=left] {$\displaystyle \bar{\beta }_{\bar{d}}$};
\draw (258.61,3.72) node [anchor=north west][inner sep=0.75pt]  [font=\footnotesize] [align=left] {$\displaystyle \alpha _{2}$};
\draw (285.61,24) node [anchor=north west][inner sep=0.75pt]  [font=\footnotesize] [align=left] {$\displaystyle \alpha _{u}$};
\end{tikzpicture}
\end{center}
\vspace{-0.3cm}
\caption{Rack-Aware Storage Model}
\end{figure}

Due to the feature of the rack-aware system,  each rack can be regarded as a cohesive unit for data transmission. Therefore, similar to distinct nodes in the homogeneous model, different racks in the rack-aware storage model can be considered homogeneous. In this paper, we focus on the so-called {\textbf{rack-aware system with locality}} \cite{CMS+23}. As presented in Fig. $3$, suppose that each rack corresponds to a repair set with $(r=u-1,\delta=2)$-locality, then the single failure $c_{2,1}$ in Rack $2$ can be recovered by downloading $\alpha'_{2,2},\cdots,\alpha'_{2,u}$ symbols from the internal surviving nodes. For the Rack $1$ where the number of failures exceeds the code locality, similar to previous discussion, $\bar{d}$ helper racks transmit $\bar{\beta_i}$, $i\in[\bar{d}]$ symbols to Rack $1$ by their relayers respectively, and the remaining nodes in this rack transmit $\alpha_{1,3},\cdots,\alpha_{1,u}$ symbols to the new replacement nodes. The repair bandwidth only counts as the inter-rack transmission, which is equal to $\bar{b}=\sum_{i=1}^{\bar{d}}\bar{\beta}_{i}$.

Specifically, we consider a locally repairable array code with $(r,\delta)$-locality such that
repair sets $\{S_{i_1},S_{i_2},\cdots,S_{i_{\bar{n}}}\}$ forms a partition of $[n]$.
By setting each repair set as a rack, we obtain a code with locality in each rack.
For locally repairable codes under the rack-aware storage model with locality, we have the following cut-set bound.

\begin{thm}[\cite{CMS+23}]\label{thm: cut-set bound}
Let $\mathcal{C}$ be a $(u\bar{n},u\bar{k};l)$ locally repairable array code with $(r,\delta=u-r+1)$-locality, where $u$ be the size of the local repair group. Denote by $\e_i$ the number of failures in $i$-th group. For any $i\in[\bar{n}]$, $\e_{i} \in [u]$ and any subset $D\subseteq[\bar{n}]\backslash \{i\}$ with $|D|=d\, (\bar{k}\leq d\leq\bar{n}-1)$, the bandwidth satisfies
\begin{equation*}
\mathcal{B}(\mathcal{C},\{i\},\e_{i},D)\geq
\begin{cases}
\frac{d(\e_{i}-\delta+1)l}{d-\bar{k}+1},& \e_{i}\geq \delta,\\
0,& otherwise.
\end{cases}
\end{equation*}

\end{thm}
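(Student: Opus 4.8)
The plan is to adapt the information-flow-graph (min-cut) technique of Dimakis et al.\ \cite{DGW+10} to the rack-aware model with locality, following \cite{CMS+23}. First I would build a directed acyclic graph: a source $S$ holding the $u\bar k l$ message symbols; for every rack a pair of nodes joined by a single edge; and a time-unrolling of a sequence of single-rack repairs. When a rack with $\e_j$ failures is repaired using a helper set $D_j$ with $|D_j|=d$, I add storage nodes for the $\e_j$ regenerated symbols, connected by infinite-capacity edges (intra-rack traffic being free) from the $u-\e_j$ surviving nodes of that rack, and by edges of total capacity $\mathcal{B}(\mathcal{C},\{j\},\e_j,D_j)$ from the relayers of the $d$ helper racks, with the split $\bar\beta_1,\dots,\bar\beta_d$ among them arbitrary. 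Finally I attach data collectors that must reconstruct the message; the bound then follows by lower-bounding, via max-flow--min-cut, the minimum $S$-to-$\mathrm{DC}$ cut over all admissible repair histories and data collectors.

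For $\e_i<\delta$ the bound reads $0$ and is immediate: since $\mathcal{C}|_{S_i}$ has minimum distance at least $\delta=u-r+1$, any $\delta-1$ erasures inside a repair set are corrected from the remaining $r$ intra-rack symbols, so a repair with fewer than $\delta$ failures incurs no cross-rack bandwidth. For $\e_i\ge\delta$ I would exploit the same locality property in the opposite direction: the punctured code on any rack has dimension at most $r$, hence once the $u-\e_i$ surviving symbols together with any $\e_i-\delta+1$ of the regenerated symbols are fixed, the remaining $\delta-1$ regenerated symbols are determined. In the graph this lets me replace the $\e_i$ regenerated nodes of a repaired rack, without loss, by a single bottleneck edge of capacity $(\e_i-\delta+1)l$ placed downstream of the cross-rack download; equivalently, across any cut a freshly repaired rack contributes at most $\min\{(\e_i-\delta+1)l,\ (\text{the part of its cross-rack inflow not already severed})\}$.

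With this reduction the remaining computation is the classical regenerating-code cut-set count carried out at the rack level. Ordering the racks so the cut meets $\bar k$ freshly repaired racks, and choosing for the $j$-th of them a helper set meeting the already-severed racks in exactly $j-1$ positions, the $j$-th rack contributes at least $\min\{(\e_i-\delta+1)l,\ (d-j+1)\bar\beta^{\ast}\}$, where $\bar\beta^{\ast}$ is the relevant per-helper amount; forcing the cut to carry the reconstruction demand and minimizing over the allocation of $\mathcal{B}$ among the $d$ helpers --- a convexity step whose worst case is the balanced allocation $\bar\beta_1=\dots=\bar\beta_d=\mathcal{B}/d$ --- gives $\mathcal{B}\ge \frac{d(\e_i-\delta+1)l}{d-\bar k+1}$. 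I expect the main obstacle to be this middle stage: pricing the $(r,u-r+1)$-locality into the flow graph consistently across \emph{all} repaired racks that a data collector sees, not merely the last one, and then carrying the convexity/averaging over unequal helper contributions so the bound holds for every admissible helper set $D$ and every repair scheme rather than only symmetric ones; the $\e_i<\delta$ boundary and the degenerate (infinite-capacity) cuts also need to be checked to be non-binding.
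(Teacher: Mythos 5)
Note first that this paper does not prove Theorem~\ref{thm: cut-set bound}: it is quoted from \cite{CMS+23}, so there is no in-paper proof to match against. Judged on its own terms, your outline identifies the two ingredients that any proof must have --- (i) the $(r,\delta)$-locality caps the post-repair content of a rack so that, given its $u-\e_i$ survivors, only $(\e_i-\delta+1)l$ further symbols are genuinely new and must cross the rack boundary, and (ii) the reconstruction property over racks then forces the standard $\frac{d}{d-\bar{k}+1}$ amplification. That is the right skeleton, and your $\e_i<\delta$ case is fine (a lower bound of $0$ is vacuous anyway).

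The gap is that you defer rather than execute the decisive middle step, and the machinery you choose makes it harder than it needs to be. Because the bound concerns a \emph{single} repair of a \emph{single} rack from a \emph{fixed} helper set $D$, no time-unrolled information flow graph, no ordering of ``$\bar{k}$ freshly repaired racks,'' and no minimization over repair histories is needed; those are the tools for the full storage--bandwidth tradeoff with repeated failures, and they drag in exactly the infinite-capacity-cut pathologies you flag at the end. The one-round argument is: for any $S\subseteq D$ with $|S|=d-\bar{k}+1$, the file is a function of the $\bar{k}-1$ racks $D\setminus S$, the $u-\e_i$ survivors of rack $i$, and the data downloaded from $S$; since each rack carries at most $rl$ units of entropy (this is where the dimension must be read as $r\bar{k}l$, not $u\bar{k}l$), this yields $\sum_{j\in S}\bar{\beta}_j\ge rl-(u-\e_i)l=(\e_i-\delta+1)l$. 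Summing this inequality over all $\binom{d}{d-\bar{k}+1}$ choices of $S$ and counting how often each $\bar{\beta}_j$ appears gives $\sum_{j\in D}\bar{\beta}_j\ge\frac{d(\e_i-\delta+1)l}{d-\bar{k}+1}$ for \emph{every} allocation of the bandwidth among helpers. Your ``convexity step whose worst case is the balanced allocation'' is a loose paraphrase of this subset-averaging; as stated it is not an argument, since you never establish the per-subset inequality that the averaging is applied to. Supplying that inequality (and justifying the $rl$ entropy cap per rack from Definition~\ref{def: array_lrc}) is the actual content of the proof, and it is precisely the part your proposal labels ``the main obstacle'' and leaves open.
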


\vspace{-0.5cm}

\tikzset{every picture/.style={line width=0.75pt}} 
\begin{figure}[ht]\label{fig: rack_lrc}  \begin{center}

\begin{tikzpicture}[x=0.5pt,y=0.5pt,yscale=-1,xscale=1]

\draw   (101.11,295.14) -- (307.58,295.14) -- (307.58,322.75) -- (101.11,322.75) -- cycle ;
\draw    (153.25,295.14) -- (153.25,322.42) ;
\draw    (205.39,295.14) -- (205.39,322.42) ;
\draw    (257.54,295.14) -- (257.54,322.42) ;
\draw   (101.11,349.7) -- (307.58,349.7) -- (307.58,377.44) -- (101.11,377.44) -- cycle ;
\draw    (153.25,349.7) -- (153.25,376.99) ;
\draw    (205.39,349.7) -- (205.39,376.99) ;
\draw    (257.54,349.7) -- (257.54,376.99) ;
\draw   (101.11,434.22) -- (307.58,434.22) -- (307.58,461.84) -- (101.11,461.84) -- cycle ;
\draw    (153.25,434.22) -- (153.25,461.5) ;
\draw    (205.39,434.22) -- (205.39,461.5) ;
\draw    (257.54,434.22) -- (257.54,461.5) ;
\draw [color={rgb, 255:red, 208; green, 2; blue, 27 }  ,draw opacity=1 ]   (101.11,322.75) -- (153.25,295.47) ;
\draw [color={rgb, 255:red, 208; green, 2; blue, 27 }  ,draw opacity=1 ]   (101.11,295.47) -- (153.25,322.75) ;
\draw   (350,295.22) -- (400.58,295.22) -- (400.58,322.84) -- (350,322.84) -- cycle ;
\draw  [dash pattern={on 4.5pt off 4.5pt}]  (284,294.57) .. controls (323.2,265.17) and (376.81,272.26) .. (399.64,294.2) ;
\draw [shift={(401,295.57)}, rotate = 226.27] [fill={rgb, 255:red, 0; green, 0; blue, 0 }  ][line width=0.08]  [draw opacity=0] (12,-3) -- (0,0) -- (12,3) -- cycle    ;
\draw    (308,362.5) .. controls (314.93,319.93) and (375.77,364.59) .. (399.86,324.09) ;
\draw [shift={(400.58,322.84)}, rotate = 118.93] [fill={rgb, 255:red, 0; green, 0; blue, 0 }  ][line width=0.08]  [draw opacity=0] (12,-3) -- (0,0) -- (12,3) -- cycle    ;
\draw    (307.58,447.84) .. controls (348.37,450.82) and (399.49,397.04) .. (400.57,323.94) ;
\draw [shift={(400.58,322.84)}, rotate = 90.45] [fill={rgb, 255:red, 0; green, 0; blue, 0 }  ][line width=0.08]  [draw opacity=0] (12,-3) -- (0,0) -- (12,3) -- cycle    ;
\draw  [dash pattern={on 4.5pt off 4.5pt}]  (349,349.5) .. controls (339.2,334.8) and (305.39,326.82) .. (285.22,348.22) ;
\draw [shift={(284,349.57)}, rotate = 310.93] [fill={rgb, 255:red, 0; green, 0; blue, 0 }  ][line width=0.08]  [draw opacity=0] (12,-3) -- (0,0) -- (12,3) -- cycle    ;
\draw  [dash pattern={on 4.5pt off 4.5pt}]  (180,349.57) .. controls (220.18,325) and (257.48,328.35) .. (282.48,348.32) ;
\draw [shift={(284,349.57)}, rotate = 220.12] [fill={rgb, 255:red, 0; green, 0; blue, 0 }  ][line width=0.08]  [draw opacity=0] (12,-3) -- (0,0) -- (12,3) -- cycle    ;
\draw  [dash pattern={on 4.5pt off 4.5pt}]  (180,433.57) .. controls (219.2,404.17) and (260.32,410.3) .. (282.66,432.21) ;
\draw [shift={(284,433.57)}, rotate = 226.27] [fill={rgb, 255:red, 0; green, 0; blue, 0 }  ][line width=0.08]  [draw opacity=0] (12,-3) -- (0,0) -- (12,3) -- cycle    ;
\draw  [dash pattern={on 4.5pt off 4.5pt}]  (129,433.57) .. controls (168.2,404.17) and (258.3,410.3) .. (282.6,432.21) ;
\draw [shift={(284,433.57)}, rotate = 226.27] [fill={rgb, 255:red, 0; green, 0; blue, 0 }  ][line width=0.08]  [draw opacity=0] (12,-3) -- (0,0) -- (12,3) -- cycle    ;
\draw    (380,331.07) .. controls (381,363.5) and (399,364.5) .. (417,365.5) ;
\draw [color={rgb, 255:red, 208; green, 2; blue, 27 }  ,draw opacity=1 ]   (153.25,322.75) -- (205.39,295.47) ;
\draw [color={rgb, 255:red, 208; green, 2; blue, 27 }  ,draw opacity=1 ]   (101.11,376.99) -- (153.25,349.7) ;
\draw [color={rgb, 255:red, 208; green, 2; blue, 27 }  ,draw opacity=1 ]   (153.25,295.47) -- (205.39,322.75) ;
\draw [color={rgb, 255:red, 208; green, 2; blue, 27 }  ,draw opacity=1 ]   (101.11,349.7) -- (153.25,376.99) ;
\draw   (400.58,295.22) -- (451.16,295.22) -- (451.16,322.84) -- (400.58,322.84) -- cycle ;
\draw  [dash pattern={on 0.84pt off 2.51pt}] (93.18,296.55) .. controls (93.18,292.13) and (96.76,288.55) .. (101.18,288.55) -- (451,288.55) .. controls (455.42,288.55) and (459,292.13) .. (459,296.55) -- (459,320.55) .. controls (459,324.96) and (455.42,328.55) .. (451,328.55) -- (101.18,328.55) .. controls (96.76,328.55) and (93.18,324.96) .. (93.18,320.55) -- cycle ;
\draw   (322,349.22) -- (372.58,349.22) -- (372.58,376.84) -- (322,376.84) -- cycle ;
\draw  [dash pattern={on 4.5pt off 4.5pt}]  (182,377.57) .. controls (228.3,401.14) and (318.25,399.55) .. (347.7,377.52) ;
\draw [shift={(349,376.5)}, rotate = 140.6] [fill={rgb, 255:red, 0; green, 0; blue, 0 }  ][line width=0.08]  [draw opacity=0] (12,-3) -- (0,0) -- (12,3) -- cycle    ;
\draw  [dash pattern={on 4.5pt off 4.5pt}]  (283,377.57) .. controls (310.58,390.31) and (319.73,397.29) .. (347.71,377.43) ;
\draw [shift={(349,376.5)}, rotate = 144.09] [fill={rgb, 255:red, 0; green, 0; blue, 0 }  ][line width=0.08]  [draw opacity=0] (12,-3) -- (0,0) -- (12,3) -- cycle    ;
\draw  [dash pattern={on 0.84pt off 2.51pt}] (91.18,352.07) .. controls (91.18,347.65) and (94.76,344.07) .. (99.18,344.07) -- (372,344.07) .. controls (376.42,344.07) and (380,347.65) .. (380,352.07) -- (380,376.07) .. controls (380,380.48) and (376.42,384.07) .. (372,384.07) -- (99.18,384.07) .. controls (94.76,384.07) and (91.18,380.48) .. (91.18,376.07) -- cycle ;

\draw (99.87,459.08) node [anchor=north west][inner sep=0.75pt]   [align=left] {$ $};
\draw (39.54,301.95) node [anchor=north west][inner sep=0.75pt]  [font=\large] [align=left] {{\footnotesize Rack 1}};
\draw (204.31,392.21) node [anchor=north west][inner sep=0.75pt]  [font=\large,rotate=-90.04]  {$...$};
\draw (38.54,350.78) node [anchor=north west][inner sep=0.75pt]  [font=\large] [align=left] {{\footnotesize Rack 2}};
\draw (45.99,441.66) node [anchor=north west][inner sep=0.75pt]  [font=\large] [align=left] {{\footnotesize Rack $\displaystyle \bar{n}$}};
\draw (90.53,391.21) node [anchor=north west][inner sep=0.75pt]  [font=\large,rotate=-90.04]  {$...$};
\draw (114.86,301) node [anchor=north west][inner sep=0.75pt]  [font=\footnotesize] [align=left] {$\displaystyle c_{1,1}$};
\draw (167.78,301) node [anchor=north west][inner sep=0.75pt]  [font=\footnotesize] [align=left] {$\displaystyle c_{1,2}$};
\draw (272.61,301) node [anchor=north west][inner sep=0.75pt]  [font=\footnotesize,color={rgb, 255:red, 208; green, 2; blue, 27 }  ,opacity=1 ] [align=left] {$\displaystyle c_{1,u}$};
\draw (115.86,354.89) node [anchor=north west][inner sep=0.75pt]  [font=\footnotesize] [align=left] {$\displaystyle c_{2,1}$};
\draw (115.86,439.73) node [anchor=north west][inner sep=0.75pt]  [font=\footnotesize] [align=left] {$\displaystyle c_{\bar{n} ,1}$};
\draw (167.78,354.89) node [anchor=north west][inner sep=0.75pt]  [font=\footnotesize] [align=left] {$\displaystyle c_{2,2}$};
\draw (272.61,354.89) node [anchor=north west][inner sep=0.75pt]  [font=\footnotesize,color={rgb, 255:red, 208; green, 2; blue, 27 }  ,opacity=1 ] [align=left] {$\displaystyle c_{2,u}$};
\draw (167.78,439.73) node [anchor=north west][inner sep=0.75pt]  [font=\footnotesize] [align=left] {$\displaystyle c_{\bar{n} ,2}$};
\draw (270.61,440.72) node [anchor=north west][inner sep=0.75pt]  [font=\footnotesize,color={rgb, 255:red, 208; green, 2; blue, 27 }  ,opacity=1 ] [align=left] {$\displaystyle c_{\bar{n} ,u}$};
\draw (221.73,308.88) node [anchor=north west][inner sep=0.75pt]  [font=\large,rotate=-359.95]  {$...$};
\draw (222.73,361.77) node [anchor=north west][inner sep=0.75pt]  [font=\large,rotate=-359.95]  {$...$};
\draw (222.73,446.61) node [anchor=north west][inner sep=0.75pt]  [font=\large,rotate=-359.95]  {$...$};
\draw (363.61,301.72) node [anchor=north west][inner sep=0.75pt]  [font=\footnotesize] [align=left] {$\displaystyle c_{1,1}$};
\draw (409.61,345.72) node [anchor=north west][inner sep=0.75pt]  [font=\footnotesize] [align=left] {$\displaystyle \bar{d}$};
\draw (414.61,301.72) node [anchor=north west][inner sep=0.75pt]  [font=\footnotesize] [align=left] {$\displaystyle c_{1,2}$};
\draw (380,331.07) node [anchor=north west][inner sep=0.75pt]  [font=\footnotesize] [align=left] {$\displaystyle \bar{\beta }_{1}$};
\draw (374.61,403.72) node [anchor=north west][inner sep=0.75pt]  [font=\footnotesize] [align=left] {$\displaystyle \bar{\beta }_{\bar{d}}$};
\draw (335.61,355.72) node [anchor=north west][inner sep=0.75pt]  [font=\footnotesize] [align=left] {$\displaystyle c_{2,1}$};
\draw (330.61,255) node [anchor=north west][inner sep=0.75pt]  [font=\footnotesize] [align=left] {$\displaystyle \alpha _{1,u}$};
\draw (221.61,332) node [anchor=north west][inner sep=0.75pt]  [font=\footnotesize] [align=left] {$\displaystyle \alpha _{2,2}$};
\draw (333.61,328) node [anchor=north west][inner sep=0.75pt]  [font=\footnotesize] [align=left] {$\displaystyle \alpha _{2,1}$};
\draw (240.61,393) node [anchor=north west][inner sep=0.75pt]  [font=\footnotesize] [align=left] {$\displaystyle \alpha '_{2,2}$};
\draw (314.61,388) node [anchor=north west][inner sep=0.75pt]  [font=\footnotesize] [align=left] {$\displaystyle \alpha '_{2,u}$};

\end{tikzpicture}
\end{center}
\vspace{-0.3cm}
\caption{Rack-Aware System with Locality}
\end{figure}
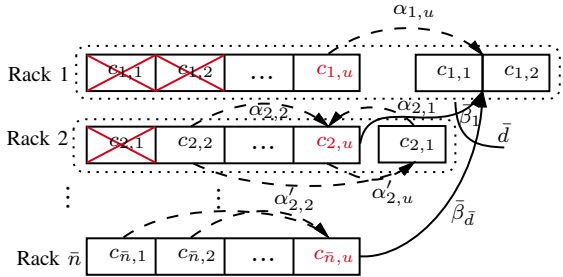

\section{Repair Scheme for Rack-Aware Locally Repairable Codes}
In this section, inspired by \cite{TB14},
we employ good polynomials to construct locally repairable array codes under the rack-aware storage model, such that all nodes in each rack form a repair set. It is an extension of the well-known Tamo-Barg code~\cite{TB14} to the array form. Combining with the technique of regenerating codes, we design a generic repair scheme to handle multiple-node failures of the proposed construction beyond the code locality.
To begin with, we recall some definitions and properties of good polynomials.

\begin{defn}[Good Polynomial \cite{TB14}]  A polynomial $h(x)\in \fq[x]$ of degree $u$ is called a good polynomial if there exists a partition $A=\cup_{i=1}^{\bar{n}}A_{i}$ over $\fq$ of size $n$ with $|A_{i}|=u$, such that $h(x)$ remains a constant on each set $A_{i}$. In other words, $h(\alpha)=y_{i}$ for any $\alpha\in A_{i}$, where $y_{i}\in\fq$ for any $i\in[\bar{n}]$.
\end{defn}

\begin{lem}[Chinese Remainder Theorem, \cite{MS77}]\label{crt}
Let $h_1(x),\cdots ,h_{s}(x)$ be
pairwise coprime polynomials over $\f_{q}$ of degree $u$.
For any  $s$ polynomials $f_{i}(x)\in \f_{q}[x]$, there exists a unique polynomial $f(x)$ of degree less than $su$ satisfying
$$f(x)\equiv f_i(x)\bmod\,h_{i}(x),\,i\in[s].$$
\end{lem}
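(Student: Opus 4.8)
The plan is to prove existence and uniqueness separately, using only that $\fq[x]$ is a Euclidean domain, so that the extended Euclidean algorithm (Bézout's identity) and unique factorization are available. For existence, I would first manufacture, for each index $i\in[s]$, an ``idempotent-like'' polynomial $e_i(x)$ with $e_i\equiv 1\bmod h_i$ and $e_i\equiv 0\bmod h_j$ for all $j\neq i$. Put $H_i(x)=\prod_{j\neq i}h_j(x)$; pairwise coprimality gives $\gcd(H_i,h_i)=1$, hence there are $a_i(x),b_i(x)\in\fq[x]$ with $a_i(x)H_i(x)+b_i(x)h_i(x)=1$. Setting $e_i(x)=a_i(x)H_i(x)$ does the job, since $h_j\mid H_i$ for $j\neq i$. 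Then $g(x)=\sum_{i=1}^s f_i(x)e_i(x)$ already satisfies $g\equiv f_i\bmod h_i$ for every $i$, and reducing $g$ modulo $\prod_{i=1}^s h_i(x)$, which has degree $su$ because each $h_i$ has degree exactly $u$, produces the desired $f(x)$ with $\deg f<su$; the congruences survive the reduction because $h_i$ divides $\prod_j h_j$.

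For uniqueness, suppose $f$ and $\tilde f$ both have degree less than $su$ and satisfy all $s$ congruences. Then $h_i\mid (f-\tilde f)$ for each $i$. By a short induction on $s$, using the standard fact that $a\mid c$, $b\mid c$, $\gcd(a,b)=1$ imply $ab\mid c$, one gets $\prod_{i=1}^s h_i(x)\mid (f-\tilde f)$. Since $\deg(f-\tilde f)<su=\deg\prod_i h_i$, this forces $f-\tilde f=0$. Alternatively, one may package both parts at once: the reduction map $\fq[x]/\bigl(\prod_i h_i\bigr)\to\bigoplus_{i=1}^s \fq[x]/(h_i)$ is $\fq$-linear, injective by the uniqueness argument, and both sides have $\fq$-dimension $su$, hence it is an isomorphism and existence comes for free. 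I would keep the explicit construction, though, since the polynomials $e_i(x)$ tend to be reusable in the code constructions that follow.

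The only step that needs any care — and it is genuinely routine — is the implication ``each $h_i$ divides $f-\tilde f$'' $\Rightarrow$ ``$\prod_i h_i$ divides $f-\tilde f$'', which is exactly where \emph{pairwise} coprimality (rather than one factor being coprime to the rest) is invoked, and which is handled by the divisibility fact above applied inductively. Everything else is a direct consequence of Bézout's identity and the degree bookkeeping $\deg h_i=u$.
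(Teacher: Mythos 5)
Your proof is correct and complete: the idempotent construction via Bézout's identity, the reduction modulo $\prod_i h_i(x)$, and the uniqueness argument from pairwise coprimality and degree counting are exactly the standard argument for the polynomial Chinese Remainder Theorem. The paper itself offers no proof of this lemma --- it is quoted as a classical result from the cited reference --- so there is nothing to compare against; your write-up would serve as a self-contained justification if one were wanted.
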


\begin{lem}[\cite{CMS+23}]\label{cross-rack rs}Let $h(x)\in \fq[x]$ be a monic polynomial of degree $u$ and $0<r\leq u$.
Suppose that $y_{1},y_{2},\cdots,y_{s}\in \fq$ are  $s$ distinct constants such that
\begin{equation}\label{eq: crt}
f(x) \,\bmod\,h(x)-y_{i}\equiv f_{i}(x)=\sum_{j=0}^{r-1}e_{i,j}x^j,\,\,\,\,i\in[s].
\end{equation}
 Then for any $y\in \f_{q}$,
 \begin{equation}\label{eq: residue}
 f(x)\bmod\,h(x)-y\equiv\sum_{j=0}^{r-1}H_{j}(y)x^{j},
 \end{equation}
where $deg(H_{j}(x))\leq s-1$  and $H_{j}(y_{i})=e_{i,j}$ for $i\in[s], j\in[0,r-1].$
\end{lem}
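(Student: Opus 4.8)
The plan is to work with the ``base-$h$'' expansion of $f$. Because $h(x)$ is monic of degree $u$, repeated division by $h$ writes $f(x)$ uniquely as $f(x)=\sum_{m\ge 0}g_m(x)\,h(x)^m$ with each $\deg g_m<u$; in the setting of the lemma $\deg f<su$ (this is exactly the regime of Lemma~\ref{crt}, since the moduli $h(x)-y_1,\dots,h(x)-y_s$ are pairwise coprime as the $y_i$ are distinct, and $f$ is their CRT lift), so the terms $g_m h^m$ lie in pairwise disjoint degree ranges $[mu,(m+1)u-1]$ and only $m=0,1,\dots,s-1$ can survive:
\begin{equation*}
f(x)=\sum_{m=0}^{s-1}g_m(x)\,h(x)^m,\qquad \deg g_m<u .
\end{equation*}

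Next I would reduce this identity modulo $h(x)-y$ for an arbitrary $y\in\fq$. Since $h(x)\equiv y\pmod{h(x)-y}$, the expansion collapses to $f(x)\equiv\sum_{m=0}^{s-1}g_m(x)\,y^m\pmod{h(x)-y}$, and because the right-hand side has $x$-degree at most $u-1<\deg\bigl(h(x)-y\bigr)$ it is exactly the canonical residue. Writing $g_m(x)=\sum_{j=0}^{u-1}g_{m,j}x^j$ and setting $H_j(y):=\sum_{m=0}^{s-1}g_{m,j}\,y^m$ for $0\le j\le u-1$, this becomes $f(x)\bmod\bigl(h(x)-y\bigr)=\sum_{j=0}^{u-1}H_j(y)\,x^j$, where $\deg H_j\le s-1$ by construction.

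It then remains to eliminate the coefficients of index $\ge r$ and to identify the values at the $y_i$. Specializing $y=y_i$ and comparing with $f(x)\bmod\bigl(h(x)-y_i\bigr)=f_i(x)=\sum_{j=0}^{r-1}e_{i,j}x^j$ from \eqref{eq: crt}, coefficientwise in $x$, gives $H_j(y_i)=e_{i,j}$ for $0\le j\le r-1$ and $H_j(y_i)=0$ for $r\le j\le u-1$, for every $i\in[s]$. For each fixed $j$ with $r\le j\le u-1$, the polynomial $H_j$ has degree at most $s-1$ yet vanishes at the $s$ distinct points $y_1,\dots,y_s$, forcing $H_j\equiv 0$; hence $f(x)\bmod\bigl(h(x)-y\bigr)=\sum_{j=0}^{r-1}H_j(y)\,x^j$ with $\deg H_j\le s-1$ and $H_j(y_i)=e_{i,j}$, which is \eqref{eq: residue}.

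The only genuinely load-bearing point is the degree bound $\deg f<su$: it is what limits the base-$h$ expansion to $s$ terms and hence caps $\deg H_j$ at $s-1$, which in turn makes the $s$-fold vanishing at $y_1,\dots,y_s$ conclusive. Everything else — the division algorithm, the substitution $h(x)\mapsto y$, and the coefficient comparison — is routine bookkeeping, so I do not anticipate a serious obstacle beyond keeping that hypothesis in view.
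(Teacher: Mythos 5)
Your proof is correct. Note that the paper itself gives no proof of this lemma --- it is quoted from \cite{CMS+23} --- so there is no in-paper argument to compare against; your route via the base-$h$ expansion $f=\sum_{m=0}^{s-1}g_m(x)h(x)^m$ followed by the substitution $h(x)\mapsto y$ is the standard one for this statement, and each step checks out: the collapse modulo $h(x)-y$ to a residue of $x$-degree at most $u-1$, the identification $H_j(y)=\sum_{m}g_{m,j}y^m$ of degree at most $s-1$, and the interpolation argument forcing $H_j\equiv 0$ for $r\le j\le u-1$. You are also right to flag $\deg f<su$ as load-bearing: the lemma as printed omits this hypothesis and can fail without it (for $r<u$, take $f(x)=x^{r}\prod_{i=1}^{s}\bigl(h(x)-y_i\bigr)$, whose residues at all $y_i$ vanish yet whose residue at a general $y$ has a nonzero coefficient of $x^{r}$), and it is exactly the bound supplied by the CRT lift of Lemma~\ref{crt} in every use the paper makes of this lemma, e.g.\ the constraint $\deg f<u(\bar{k}-1)+r$ in Construction~\ref{cons: single-rack}.
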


\subsection{A Generic Construction}\label{sec: GC}

Let $h(x)$ be a good polynomial of degree $u$. For any $w\in [0,l-1]$, the $\bar{n}$ distinct elements
 $y^{(w)}_1,\cdots, y^{(w)}_{\bar{n}}$ over $\f_{q}$  satisfy
$h(\alpha)=y^{(w)}_i$ for $\alpha\in A^{(w)}_i$, where $|A^{(w)}_i|=u$ for $ i\in [\bar{n}]$.
Consider a vector ${\bm f}({\bm X})=(f^{(0)}(x_0),\cdots,f^{(l-1)}(x_{l-1}))^\top$, where
 $f^{(i)}(x_i)\in \fq[x]$ and $\bm X=(x_0,x_1,\ldots,x_{l-1})^\top$. Suppose that each component of ${\bm f}({\bm X})$ satisfies
\begin{equation}\label{eq: def_poly}
f^{(w)}(x) \equiv f^{(w)}_{i}(x)\,\bmod\,h(x)-y^{(w)}_{i},\,\,\,i\in[\bar{k}],
\end{equation}
where $w\in[0,l-1]$, $\bar{k}<\bar{n}$ and $deg(f^{(w)}_{i}(x))\leq r-1$.

Define an array code
\begin{equation}\label{eq: rack_lrc}
\mathcal{C}=\{{\bm f}({\bm{\alpha}_{i,j}}): i\in[\bar{n}],j\in[u]\},
\end{equation}
where ${\bm f}({\bm X})=(f^{(0)}(x_0),\cdots,f^{(l-1)}(x_{l-1}))^\top$ satisfies \eqref{eq: def_poly} and 
\begin{equation*}
\bm{\alpha}_{i,j}=(\alpha^{(0)}_{i,j},\alpha^{(1)}_{i,j},\cdots,\alpha^{(l-1)}_{i,j})\in A^{(0)}_{i}\times\cdots\times A^{(l-1)}_{i}.
\end{equation*}
Then, arrange these nodes corresponding to evaluation points $\{\bm\alpha_{i,j}:j\in[u]\}$ into $i$-th rack. We present it as an array
 \begin{equation}\label{eq: lrc_array}
 \begin{pmatrix}
 {\bm f}(\bm\alpha_{1,1})& {\bm f}(\bm\alpha_{1,2})&\cdots& {\bm f}(\bm\alpha_{1,u})\\
  {\bm f}(\bm\alpha_{2,1})& {\bm f}(\bm\alpha_{2,2})&\cdots& {\bm f}(\bm\alpha_{2,u})\\
  \vdots&\vdots&\cdots&\vdots\\
   {\bm f}(\bm\alpha_{\bar{n},1})& {\bm f}(\bm\alpha_{\bar{n},2})&\cdots& {\bm f}(\bm\alpha_{\bar{n},u})\\
 \end{pmatrix},
 \end{equation} which implies that each row
$
\{{\bm f}(\bm{\alpha}_{i,j}): j\in[u]\}
$
are nodes in Rack $i$ and each node stores an $l$-length vector.

\begin{thm}\label{thm:rack_rs}{Let $0< r <u$. For any $i\in [\bar{n}]$, $w\in [0,l-1]$, $j_1\ne j_2\in [u]$,
if $\alpha^{(w)}_{i,j_1}\ne \alpha^{(w)}_{i,j_2}$, then the code $\mathcal{C}$ of \eqref{eq: rack_lrc} is an $(n=u\bar{n},r\bar{k};l)$ locally repairable array code with $(r,u-r+1)$-locality.} Furthermore, the residue polynomials of
\begin{equation}
f^{(w)}(x)\,\bmod\,h(x)-y^{(w)}_{i},\,\,\,i\in[\bar{n}],\,w\in[0,l-1]
\end{equation}
can be represented as
$
f^{(w)}_i(x)=\sum_{j=0}^{r-1}e^{(w)}_{i,j}x^{j}.
$
For any $j\in[0,r-1]$,
\begin{equation}\label{eq: cross_rack_mds}
(\bm{e}_{1,j},\bm{e}_{2,j},\cdots,\bm{e}_{\bar{n},j})\,\,
\end{equation}
are codewords of an $(\bar{n},\bar{k};l)$ MDS array code, where $\bm{e}_{i,j}=(e^{(0)}_{i,j},e^{(1)}_{i,j},\cdots,e^{(l-1)}_{i,j})^{\top}$.
\end{thm}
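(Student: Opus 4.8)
The plan is to decouple the $l$ layers of the array: for each fixed $w\in[0,l-1]$ the $w$-th coordinate of every node is governed by a single Tamo--Barg-type code built from the good polynomial $h$ and the constants $y^{(w)}_1,\dots,y^{(w)}_{\bar n}$, and Lemma~\ref{cross-rack rs} pins down the behaviour of its local residues. I would first establish the ``furthermore'' statement (the shape of the residue polynomials and the cross-rack MDS structure), since locality and dimension then drop out of the same computation.

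\emph{Residue representation.} Fix $w$. As $y^{(w)}_1,\dots,y^{(w)}_{\bar n}$ are distinct, the polynomials $h(x)-y^{(w)}_i$ are pairwise coprime of degree $u$, so \eqref{eq: def_poly} together with Lemma~\ref{crt} determines $f^{(w)}$ uniquely, of degree less than $\bar k u$. Applying Lemma~\ref{cross-rack rs} with $s=\bar k$ to the constants $y^{(w)}_1,\dots,y^{(w)}_{\bar k}$ produces polynomials $H^{(w)}_0,\dots,H^{(w)}_{r-1}$ of degree at most $\bar k-1$ such that
\[
f^{(w)}(x)\bmod\bigl(h(x)-y\bigr)\equiv\sum_{j=0}^{r-1}H^{(w)}_j(y)\,x^{j}\qquad\text{for every }y\in\fq .
\]
Specialising to $y=y^{(w)}_i$ for each $i\in[\bar n]$ shows that the residue polynomial is $f^{(w)}_i(x)=\sum_{j=0}^{r-1}e^{(w)}_{i,j}x^{j}$ with $e^{(w)}_{i,j}=H^{(w)}_j(y^{(w)}_i)$; in particular $\deg f^{(w)}_i\le r-1$ for \emph{all} $i\in[\bar n]$, not only for $i\le\bar k$, which is the ``furthermore'' representation.

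\emph{MDS structures.} Fix $j\in[0,r-1]$. By the previous step, in each layer $w$ the vector $(e^{(w)}_{1,j},\dots,e^{(w)}_{\bar n,j})=\bigl(H^{(w)}_j(y^{(w)}_1),\dots,H^{(w)}_j(y^{(w)}_{\bar n})\bigr)$ is a codeword of ${\rm RS}(\bar n,\bar k,\bar A^{(w)})$, where $\bar A^{(w)}=\{y^{(w)}_i:i\in[\bar n]\}$. As the message $\{f^{(w)}_i:i\in[\bar k],\,w\in[0,l-1]\}$ ranges over all tuples of polynomials of degree $\le r-1$, the coefficients $e^{(w)}_{i,j}$ with $i\le\bar k$ range over all of $\fq^{\bar k}$ independently in $w$, so by interpolation each $H^{(w)}_j$ ranges over all polynomials of degree $<\bar k$; hence $(\bm e_{1,j},\dots,\bm e_{\bar n,j})$ runs over the layerwise product $\prod_{w=0}^{l-1}{\rm RS}(\bar n,\bar k,\bar A^{(w)})$, an $(\bar n,\bar k;l)$ MDS array code. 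For locality, attach to any node of rack $i$ the repair set $S_i=\{\bm\alpha_{i,j}:j\in[u]\}$, so that $i\in S_i$ and $|S_i|=u=r+(u-r+1)-1$. Since $h\equiv y^{(w)}_i$ on $A^{(w)}_i$, we have $f^{(w)}(\alpha^{(w)}_{i,j})=f^{(w)}_i(\alpha^{(w)}_{i,j})$, and by hypothesis $\alpha^{(w)}_{i,1},\dots,\alpha^{(w)}_{i,u}$ are $u$ distinct points of $A^{(w)}_i$; therefore $\mathcal C|_{S_i}\subseteq\prod_{w=0}^{l-1}{\rm RS}\bigl(u,r,\{\alpha^{(w)}_{i,j}:j\in[u]\}\bigr)$, an $(u,r;l)$ MDS array code of minimum distance $u-r+1$. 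A subcode has minimum distance at least $u-r+1$, so every node of $\mathcal C$ has $(r,u-r+1)$-locality.

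\emph{Dimension, and the main difficulty.} The same distinctness also yields that the encoding is injective: if two messages give the same codeword, then in each layer $w$ and rack $i$ the two residue polynomials of degree $\le r-1<u$ agree on the $u$ distinct points $\alpha^{(w)}_{i,1},\dots,\alpha^{(w)}_{i,u}$ and hence coincide, and restricting $i$ to $[\bar k]$ recovers the message; thus $\mathcal C$ is $\fq$-linear of $\fq$-dimension $r\bar k l$, i.e.\ an $(u\bar n,r\bar k;l)$ array code. The only genuinely delicate point is the passage from the classical single-symbol ($l=1$) Tamo--Barg picture to its array ``stacking'': one must consistently measure the minimum distance of $\mathcal C|_{S_i}$ (and of \eqref{eq: cross_rack_mds}) as the number of \emph{nonzero columns} of the array rather than nonzero entries, and verify that the hypothesis $\alpha^{(w)}_{i,j_1}\ne\alpha^{(w)}_{i,j_2}$ is exactly what forces each layer to be a nondegenerate length-$u$ Reed--Solomon code. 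Everything else is a routine application of Lemma~\ref{cross-rack rs} layer by layer together with standard Reed--Solomon facts.
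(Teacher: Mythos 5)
Your proof is correct and follows essentially the same route as the paper's: apply Lemma~\ref{cross-rack rs} layer by layer in $w$ to get degree-$(r-1)$ residues (so each rack, under the distinctness hypothesis, is a stacked $[u,r]$ RS codeword, giving the $(r,u-r+1)$-locality) and degree-$(\bar k-1)$ interpolants $H^{(w)}_j$ (so the coefficient vectors across racks form stacked $[\bar n,\bar k]$ RS codewords, giving \eqref{eq: cross_rack_mds}). The extra checks you supply — injectivity of the encoding for the dimension count and surjectivity onto the product RS code — are refinements the paper leaves implicit, not a different argument.
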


\begin{proof}
For any $w\in[0,l-1]$, according to Lemma~\ref{cross-rack rs}, we can claim that $f^{(w)}_{i}(x)$ is of degree at most $r-1$ for $i\in[\bar{n}]$, thus can be represented as
$
f^{(w)}_i(x)=\sum_{j=0}^{r-1}e^{(w)}_{i,j}x^{j}.
$
For any $i\in [\bar{n}]$, $w\in [0,l-1]$, {we have
$\alpha^{(w)}_{i,j_1}\ne \alpha^{(w)}_{i,j_2}$ for $j_1\ne j_2\in [u]$ and $f^{(w)}(\alpha)=f^{(w)}_{i}(\alpha)$ for $\alpha\in A^{(w)}_{i}$.}
Thus, the vector $
(f^{(w)}(\alpha^{(w)}_{i,1}),f^{(w)}(\alpha^{(w)}_{i,2}),\cdots,f^{(w)}(\alpha^{(w)}_{i,u}))$ with $A^{(w)}_{i}=\{\alpha^{(w)}_{i,1},\alpha^{(w)}_{i,2},\cdots,\alpha^{(w)}_{i,u}\}$
{forms a codeword of a $[u,r]$ RS with evaluation points $A^{(w)}_{i}$, thereby each row in \eqref{eq: lrc_array} is a codeword of a $(u,r,l)$ MDS array code,
which  forms a repair set of $(r,u-r+1)$ locality according to Definition \ref{def: array_lrc}.}

 Moreover, by Lemma \ref{cross-rack rs}, for $w\in[0,l-1]$, we have
\begin{equation*}
f^{(w)}_i(x)=\sum_{j=0}^{r-1}e^{(w)}_{i,j}x^{j}=\sum_{j=0}^{r-1}H^{(w)}_{j}(y^{(w)}_i)x^{j},\,\,i\in[\bar{n}],
\end{equation*}
where $H^{(w)}_j(x)$ is a polynomial of degree less than $\bar{k}$.
Hence, $(e^{(w)}_{1,j},e^{(w)}_{2,j},\cdots,e^{(w)}_{\bar{n},j})$
is a codeword of an $[\bar{n},\bar{k}]$ RS code for
any $j\in[0,r-1]$, which implies that
$(\bm{e}_{1,j},\bm{e}_{2,j},\cdots,\bm{e}_{\bar{n},j}),$ $ j\in[0,r-1]$
are codewords of an $(\bar{n},\bar{k},l)$ MDS array code. \end{proof}



\subsection{Repair Mechanism}\label{sec: repair_mechanism}

In this subsection, we discuss the repair mechanism of the generic construction in Section~\ref{sec: GC}. As mentioned earlier, the array code $\mathcal{C}$ exhibits $(r, u-r+1)$-locality within each rack. This implies that if the number of failed nodes within racks is less than $u-r+1$, the local repair group can successfully recover these failures solely from internal nodes. However, if the number of failures exceeds the locality, we employ the properties between the racks to handle the ``extra'' failures. We provide a detailed illustration of various erasure patterns and their corresponding repair mechanisms below.

Denote by $R=\{i_1,\cdots,i_m\}\subset [\bar{n}]$ the set of racks contained failures, and let $\e_{i_{j}}$ be the number of failed nodes in $i_j$-th rack. Due to the locality, we classify the rack failure into two scenarios: one that can be recovered by the internal nodes within the same rack, and the other that requires the help of other surviving racks. Let  $m_1,m_2$ be the number of racks containing failed nodes corresponding to these two scenarios respectively, i.e.,
\begin{equation*}
\begin{cases}
  1\leq \e_{i_{j}}\leq u-r&j\in[m_{1}], \\
   u-r<\e_{i_{j}}\leq u& j\in[m_1+1,m_1+m_2],
\end{cases}
\end{equation*}
 where $|R|=m=m_{1}+m_{2}$, $m_2\leq \bar{n}-\bar{k}$. Based on this, we have the following erasure patterns.

{\textbf{Case~1:} $m_{1}=m$,  $m_{2}=0$: Since the nodes in each rack form a codeword of a $[u,r]$ RS code, for $j\in[m]$, the rack $i_{j}\in R$  downloads all symbols from $r$ remaining nodes within itself to independently repair at most $u-r$ failures.

{\textbf{Case~2:} $m_1=0$,  $m_{2}=m$:} Let $\e'_{i_{j}}=\e_{i_{j}}-u+r$ for $j\in[m]$. For $i_{j}\in R$, Rack $i_j$ computes $\e'_{i_{j}}$ coefficients $\{\bm{e}_{i_{j},u-\tau}:\tau\in[\e'_{i_{j}}]\}$ of the residue polynomial $\bm{f}_{i_{j}}(\bm{X})$. Then, by combining data from the $u-\e_{i_{j}}$ surviving nodes in this rack,  each component of $\bm{f}_{i_{j}}(\bm{X})$ with degree $r-1$ can be determined, thereby recovering all the failures.

{\textbf{Case~3:} $m_{1},m_{2}>0$:} The hybrid erasure pattern can be decomposed into parallel repair processes corresponding to {\textbf{Case $1$}} and $2$.
For the racks contained the number of failures less than $u-r+1$, i.e. Racks $i_{j}$ for $j\in[m_1]$, conduct the repair of {\textbf{Case} $1$}. Otherwise, the repair is in accordance with {\textbf{Case} $2$}.

Notably, in {\textbf{Case} $3$},  since the $(r,u-r+1)$-locality of each rack, $i_j$-th rack for $j\in[m_1]$ suffices to compute its residue polynomial $\bm{f}_{i_j}(\bm{X})$ by connecting at least $r$ surviving nodes.  These racks have no effect on the repair of racks containing more than $u-r$ failures. For the sake of simplicity, we omit the procedure of local repair and only discuss how to repair the racks that occur more than $u-r$ failures, i.e., the repair of {\textbf{Case} $2$}.

It is clear that each component of the code $\mathcal{C}$ in \eqref{eq: rack_lrc} can be seen as a subcode of a rack-aware RS code given in \cite{YCT23}.  Motivated by this, we generalize its repair framework to adapt array codes that have locality in each rack.
Assume that there are $m$ racks suffering failures indexed by the set $R=\{i^*_{1},\cdots,i^*_{m}\}\subseteq [\bar{n}]$.
 For $1\leq \tau\leq m$, $\e_\tau$ is denoted as the number of failures in Rack $i^*_{\tau}$.
 Let $D=\{i_{1},\cdots,i_{{d}}\}\subseteq [\bar{n}]\backslash\{i^*_{1},\cdots,i^*_{m}\}$ be the set of helper racks of size ${d}$.
 The following procedure shows the concrete repair of the rack-aware locally repairable code given in \eqref{eq: rack_lrc}.

 {
\textbf{Repair procedure of codewords by \eqref{eq: rack_lrc}:}
\begin{enumerate}
\item[Step 1.] For $i\in D$, Rack $i$ computes $\{f^{(w)}_{i}(x):w\in[0,l-1]\}$ from any $r$ storage nodes within the rack by Lagrange interpolation, thereby obtaining an array consisting of coefficients
$(\bm{e}_{i,0},\bm{e}_{i,1},\cdots,\bm{e}_{i,u-1}).$

\item[Step 2.] For any $t\in[r]$,
$
c_{r-t}=(\bm{e}_{1,r-t},\bm{e}_{2,r-t},\cdots,\bm{e}_{\bar{n},r-t}),
$
forms a codeword of an $(\bar{n},\bar{k},l)$ MDS array code.

 Let $\e'_{\tau}=\epsilon_\tau-u+r$ and $\epsilon=\max_{1\leq \tau\leq m}\epsilon'_\tau$.
Define
\begin{equation*}\label{eqn_def_Ri}
R_t\triangleq \{i^*_{\tau}~:~\epsilon'_{\tau}\geq t, 1\leq \tau\leq m\},\,\,t\in[\e],
\end{equation*}
 then recover the $|R_{t}|$ symbols
\begin{eqnarray}\label{eq_co}
\{\bm{e}_{i, r-t}: \,i\in R_t\}
\end{eqnarray}
of the MDS array codeword $c_{r-t}$.

\item[Step 3.] Given $i_{\tau}^*$ and $\tau\in [m]$, let the coordinates of $u-\epsilon_{\tau}$ surviving nodes be
$j_1,\cdots,j_{u-\epsilon_{\tau}}$. Then, for each $i_{\tau}^*$ and $w\in[0,l-1]$,  by \eqref{eq_co}, one can compute the polynomial
\begin{equation*}
f_{i^*_{\tau}}^{(w)}(x)-\sum_{t=1}^{\e_{\tau}-u+r} e^{(w)}_{i^*_{\tau},r-t}x^{r-t},
\end{equation*}
whose degree is at most $u-\e_{\tau}-1$, from the  $u-\epsilon_{\tau}$ surviving nodes $\{\bm{f}(\bm{\alpha}_{i_{\tau}^*,j_1}),\cdots,\bm{f}(\bm{\alpha}_{i_{\tau}^*,j_{u-\e_{\tau}}})\}$  in rack $i_{\tau}^*$. Then, the repair center can compute the remainder coefficients $\{\bm{e}_{i_{\tau}^*, r-t}: \,t\in[\e'_{\tau}+1,r]\}$
and thereby the entire polynomials
\begin{eqnarray*}
f^{(w)}_{i^*_{\tau}}(x)=\sum_{j=0}^{r-1} e^{(w)}_{i^*_{\tau},j}x^j, \,\, w\in[0,l-1],
\end{eqnarray*}
which can  figure out the $\e_{\tau}$ failures
\begin{equation*}
\{\bm{f}(\bm{\alpha}_{i^*_{\tau},j_{u-\e_{\tau}+1}}),\cdots,\bm{f}(\bm{\alpha}_{i^*_{\tau},j_{{u}}})\}.
\end{equation*}

\end{enumerate}

According to the assumption of the rack-aware storage model, the repair bandwidth is only required to count the data transmission across the rack to repair the symbols of \eqref{eq_co} in Step 2, i.e., the amount of data transmitted from helper racks (repair sets) to the rack that contains the number of failures more than $u-r$.

\begin{thm}\label{thm_general_repair} For a rack-aware locally repairable array code defined by \eqref{eq: rack_lrc},
 assume that the $m$ racks $\{i_\tau^*:\tau\in[m]\}\subseteq [\bar{n}]$
contain $\{\e_\tau:\tau\in[m]\}$ failures respectively. Setting $\epsilon=\max_{1\leq \tau\leq m}\epsilon'_\tau$, where $\e'_{\tau}=\epsilon_\tau-u+r$ and $\epsilon_\tau>u-r$.
For $t\in[ \epsilon]$,  if the repair bandwidth $b_{|R_t|}$
 is capable to recover  the coefficients  in  \eqref{eq_co}  from $|D|$ helper racks, the failures can be recovered with
  bandwidth
$
 b=\sum_{1\leq t\leq \e}b_{|R_t|}.
$
 \end{thm}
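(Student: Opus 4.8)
The plan is to show that the three-step \emph{Repair procedure of codewords by} \eqref{eq: rack_lrc} actually reconstructs every failed symbol, and then to account for its inter-rack traffic; the statement is essentially a correctness-and-accounting verification, so the only structural inputs are those already supplied by Theorem~\ref{thm:rack_rs}. Concretely, I would use two facts from that theorem: for every coordinate $w\in[0,l-1]$ the restriction of $f^{(w)}$ to rack $i$ is a $[u,r]$ RS codeword on the pairwise distinct evaluation points $\{\alpha^{(w)}_{i,j}:j\in[u]\}$; and for every $j\in[0,r-1]$ the vector $(\bm e_{1,j},\dots,\bm e_{\bar n,j})$ is a codeword of an $(\bar n,\bar k;l)$ MDS array code. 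I would also record the combinatorics of the index sets $R_t$: since $\epsilon_\tau>u-r$ we have $\epsilon'_\tau=\epsilon_\tau-u+r\ge 1$ for all $\tau\in[m]$, hence $R_1\supseteq R_2\supseteq\cdots\supseteq R_\epsilon$ with $R_1=\{i^*_1,\dots,i^*_m\}$, and $i^*_\tau\in R_t$ precisely when $t\le\epsilon'_\tau$.

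First I would check Step~1: each helper rack $i\in D$ still owns all $u$ of its nodes, so picking any $r$ of them and interpolating coordinatewise recovers $f^{(w)}_i(x)$ for every $w$, hence the entire coefficient array $(\bm e_{i,0},\dots,\bm e_{i,u-1})$; this is purely intra-rack and therefore free in the rack-aware model. Next, for Step~2, I fix $t\in[\epsilon]$ and observe that $c_{r-t}=(\bm e_{1,r-t},\dots,\bm e_{\bar n,r-t})$ is an $(\bar n,\bar k;l)$ MDS array codeword whose $|D|$ coordinates indexed by $D$ are already known from Step~1. By the hypothesis of the theorem there is a scheme using inter-rack traffic $b_{|R_t|}$ from these $|D|$ helper racks that recovers the $|R_t|$ symbols $\{\bm e_{i,r-t}:i\in R_t\}$ of \eqref{eq_co}. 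Performing this for $t=1,\dots,\epsilon$ costs total inter-rack bandwidth $\sum_{t=1}^{\epsilon}b_{|R_t|}$, while all helper-rack work stays local.

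Then I would verify Step~3, which is where the degree bookkeeping has to be done carefully and is the part I expect to require the most attention (there is no deeper obstacle, only the need to track exactly which coefficients are known at each stage). Fix a failed rack $i^*_\tau$ and a coordinate $w$, and let $j_1,\dots,j_{u-\epsilon_\tau}$ index its surviving nodes, whose evaluation points $\alpha^{(w)}_{i^*_\tau,j_1},\dots,\alpha^{(w)}_{i^*_\tau,j_{u-\epsilon_\tau}}$ are pairwise distinct. From Step~2 the coefficients $e^{(w)}_{i^*_\tau,r-t}$ for $t\in[\epsilon'_\tau]$ are known (as $i^*_\tau\in R_t$ for those $t$), so
\[
g^{(w)}_\tau(x)\;=\;f^{(w)}_{i^*_\tau}(x)-\sum_{t=1}^{\epsilon'_\tau}e^{(w)}_{i^*_\tau,r-t}x^{r-t}
\]
has degree at most $r-1-\epsilon'_\tau=u-\epsilon_\tau-1$. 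Since $f^{(w)}$ agrees with $f^{(w)}_{i^*_\tau}$ on $A^{(w)}_{i^*_\tau}$, the value of $g^{(w)}_\tau$ at each surviving evaluation point is computable from that node's stored symbol together with the now-known top coefficients; with $u-\epsilon_\tau$ distinct points and $\deg g^{(w)}_\tau\le u-\epsilon_\tau-1$, Lagrange interpolation determines $g^{(w)}_\tau$ uniquely, hence the remaining coefficients $\{e^{(w)}_{i^*_\tau,r-t}:t\in[\epsilon'_\tau+1,r]\}$. Combined with Step~2 this gives the whole residue polynomial $f^{(w)}_{i^*_\tau}(x)=\sum_{j=0}^{r-1}e^{(w)}_{i^*_\tau,j}x^{j}$, and evaluating it at the erased coordinates recovers the $\epsilon_\tau$ failed symbols of rack $i^*_\tau$ in coordinate $w$; ranging over $w$ and $\tau$ recovers all failures. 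Because Step~3 uses only intra-rack data and the coefficients already delivered in Step~2, it contributes no further inter-rack cost, so the total repair bandwidth is exactly $b=\sum_{1\le t\le\epsilon}b_{|R_t|}$, as claimed.
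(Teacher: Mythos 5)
Your proof is correct and follows essentially the same route as the paper: the paper's own argument simply invokes Step 3 of the repair procedure to conclude that the residue polynomials (and hence all failed nodes) are determined, and then notes that only the Step-2 cross-rack transfers count toward bandwidth. You merely spell out the degree bookkeeping and interpolation details that the paper leaves implicit in its description of the procedure.
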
}

\begin{proof}
According to Step $3$ of \textbf{Repair procedure of codewords by \eqref{eq: rack_lrc}}, the residue polynomial $\bm{f}_{i^*_{\tau}}(\bm{X})$ for $\tau\in[m]$ can be computed from \eqref{eq_co} and the surviving nodes of Rack $i^*_{\tau}$, which means that all of the nodes in failed racks can be determined. Since the assumption of the rack-aware model, only the cross-rack transmission is taken into account of the bandwidth. Thus the repair bandwidth is exactly $b=\sum_{1\leq t\leq \e}b_{|R_t|}.$
\end{proof}

\section{A Family of Bandwidth-Optimal Locally Repairable Array Codes}
In this section, we consider the scenario that a rack, or a single local repair set suffers failures exceeding its local repair capacity, i.e., $m=1$. We present an explicit construction of locally repairable array codes with desirable repair properties based on the generic construction given in Section~\ref{sec: GC}.
 Our construction is partially inspired by the known MSR code proposed in literature \cite{YB17}.
 The resulting codes support the optimal repair of a local repair set (or rack) that contains failures  $\e_{\tau}>u-r$.



Let $\bar{s}={d}-\bar{k}+1$ and $l=\bar{s}^{\bar{n}}$, where ${d}$ is the number of helper racks.  Suppose that $|\fq|>\bar{s}n$ and let $\lambda\in\fq$ be an element of multiplicative order $\bar{s}n$.  For an integer $w\in[0,l-1]$, denote $w_{i}$ as the $i$-th coordinate of the $\bar{s}$-ary expansion of $w$ and
define the set of evaluation points $A^{(w)}=\cup_{i=1}^{\bar{n}}A_{i}^{(w)}=\{\lambda_{ij}^{(w)}:i\in[\bar{n}],j\in[u]\}$, where $\lambda^{(w)}_{ij}=\lambda^{(i-1)\bar{s}+w_{i}}(\lambda^{\bar{s}\bar{n}})^{j-1}$, for simplicity, we denote $\lambda_{i,w}=\lambda^{(i-1)\bar{s}+w_{i}}$.

\begin{cons}\label{cons: single-rack}Let $h(x)=x^u$ and $0< r <u.$
For positive integer $\bar{k}<\bar{n}$ and $w\in[0,l-1]$, let
\begin{equation*}
\begin{split}
\Psi^{(w)}(\bar{k},u,&r)\triangleq \{f(x):\deg(f(x))<u(\bar{k}-1)+r, \\
&f(x) \equiv f_{i}(x)\,\bmod\,h(x)-\lambda_{i,w}^u,
 \deg(f_i(x))<r
\}
\end{split}
\end{equation*}
For $\bm f=(f^{(0)}(x),f^{(1)}(x),\cdots,f^{(l-1)}(x))^\top$ and $\bm X=(x_0, x_1,\cdots, x_{l-1})^\top\in \fq^{l}$, let the polynomial
${\bm f}({\bm X})\triangleq (f^{(0)}(x_0),\cdots,f^{(l-1)}(x_{l-1}))^\top$.

Define an array code
\begin{equation}\label{eq: rack_lrc_single}
\begin{split}
\mathcal{C}=\{{\bm f}({\bm\lambda}_{ij})_{i\in[\bar{n}],j\in[u]}~:~&{\bm f}=(f^{(0)}(x),f^{(1)}(x),\cdots,\\&f^{(l-1)}(x))^\top,\\
 &f^{(w)}(x)\in \Psi^{(w)}(\bar{k},u,r),\\
 & \text{for }w\in [0,l-1]\},
\end{split}
\end{equation}
where
$
\bm{\lambda}_{ij}=(\lambda^{(0)}_{ij},\lambda^{(1)}_{ij},\cdots,\lambda^{(l-1)}_{ij})^\top.
$
Then, arrange the nodes corresponding to evaluation points $\{\bm\lambda_{ij}:j\in[u]\}$ into $i$-th rack.
\end{cons}

 The repair procedure follows from Section~\ref{sec: repair_mechanism}.
Since Steps 1 and 3 are the same for different constructions, herein we only discuss Step 2 in detail.

\begin{thm} The code $\mathcal{C}$ given in Construction~\ref{cons: single-rack} is a rack-based locally repairable array code with optimal repair bandwidth for the repair of any number of failures in a single rack from any $d$ helper racks.
\end{thm}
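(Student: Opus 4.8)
The plan is to establish two things: first, that $\mathcal{C}$ is a valid locally repairable array code with $(r,u-r+1)$-locality (which follows immediately from Theorem~\ref{thm:rack_rs} once we check the evaluation-point hypothesis), and second, that the cross-rack bandwidth incurred in Step~2 of the repair procedure matches the cut-set bound of Theorem~\ref{thm: cut-set bound} when $m=1$. The first part is routine: for fixed $i\in[\bar n]$ and $w\in[0,l-1]$, the points $\lambda^{(w)}_{ij}=\lambda_{i,w}(\lambda^{\bar s\bar n})^{j-1}$ for $j\in[u]$ are distinct because $\lambda$ has multiplicative order $\bar s n=\bar s u\bar n$, so $(\lambda^{\bar s\bar n})^{j-1}$ runs over $u$ distinct values; hence $\alpha^{(w)}_{i,j_1}\ne\alpha^{(w)}_{i,j_2}$ for $j_1\ne j_2$, and Theorem~\ref{thm:rack_rs} applies, giving an $(u\bar n, r\bar k; l)$ code with the claimed locality, together with the fact that each coefficient sequence $(\bm e_{1,j},\dots,\bm e_{\bar n,j})$ forms a codeword of an $(\bar n,\bar k;l)$ MDS array code.

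Next I would carry out Step~2 concretely. With $m=1$, write $i^*$ for the failed rack and $\e'=\e-u+r$; we must recover $\{\bm e_{i^*,r-t}:t\in[\e']\}$, i.e. for each $t\in[\e']$ a single erased coordinate of the $(\bar n,\bar k;l)$ MDS array codeword $c_{r-t}=(\bm e_{1,r-t},\dots,\bm e_{\bar n,r-t})$, from $d$ helper racks. The key computation is to show that this MDS array code, when restricted to the sublattice of coordinates determined by the $\bar s$-ary digit structure, is exactly (a coordinate permutation of) the Ye--Barg / Yan--Tang type MSR code referenced in \cite{YB17}: the constant $\lambda_{i,w}^u=\lambda^{u((i-1)\bar s+w_i)}$ that defines $\Psi^{(w)}$ is precisely the "evaluation symbol" whose $i$-th $\bar s$-ary digit is varied in $w_i$, which is the hallmark of the optimal-access MSR construction. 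I would make this identification precise by checking that the parity-check equations $\sum_{i}(\lambda_{i,w}^u)^p\,e^{(w)}_{i,j}=0$ for $p\in[0,\bar n-\bar k-1]$, read across $w\in[0,l-1]$ with $l=\bar s^{\bar n}$, coincide with the defining equations of the MSR code, so that the known repair scheme recovers one erasure of $c_{r-t}$ with cross-rack download $\frac{d}{d-\bar k+1}$ symbols per helper rack, hence $b_1=d\cdot\frac{l}{d-\bar k+1}$. Summing over $t\in[\e']$ via Theorem~\ref{thm_general_repair} gives total bandwidth $b=\e'\cdot\frac{dl}{d-\bar k+1}=\frac{d(\e-u+r)l}{d-\bar k+1}=\frac{d(\e-\delta+1)l}{d-\bar k+1}$ since $\delta=u-r+1$, which is exactly the cut-set bound of Theorem~\ref{thm: cut-set bound}; optimality follows.

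The main obstacle I anticipate is the bookkeeping in the MSR identification: one must show that a single helper rack $i\in D$, which in Step~1 recovers the \emph{full} coefficient array $(\bm e_{i,0},\dots,\bm e_{i,u-1})$ and could therefore in principle send everything, can instead be made to send only the MSR-prescribed $\frac{l}{d-\bar k+1}$ symbols \emph{jointly} sufficient for all $\e'$ erasures $\{c_{r-t}:t\in[\e']\}$ — i.e. that the repair subspaces for the different levels $t$ can be aligned so the per-helper download does not scale with $\e'$ beyond the factor already counted. If the construction instead downloads independently for each level, then $b$ is the stated sum and the argument is as above; the subtlety is only in matching the precise normalization of $l=\bar s^{\bar n}$, $\bar s=d-\bar k+1$, and the order-$\bar s n$ element $\lambda$ to the parameters of \cite{YB17} so that the MSR repair-matrix rank conditions hold. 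I would handle this by exhibiting, for each erased rack index $i^*$ and each $w$, the explicit set of $\bar s^{\bar n-1}$ evaluation points whose download suffices, exactly as in the optimal-access scheme, and verifying the interpolation is invertible using that $\lambda^{\bar s\bar n}$ and the $\lambda_{i,w}^u$ generate the required distinct values.
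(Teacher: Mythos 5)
Your proposal follows essentially the same route as the paper: locality via Theorem~\ref{thm:rack_rs}, then Ye--Barg-style repair of the coefficient codewords $c_{r-t}$ exploiting the $\bar s$-ary digit structure of $w$, with per-level cross-rack download $l/\bar s$ per helper summed over the $\e'$ levels to meet the cut-set bound. The only substantive difference is that the paper does not invoke \cite{YB17} as a black box but derives the repair equations explicitly by summing the parity checks over $w_\tau\in[0,\bar s-1]$ to obtain a GRS code of length $\bar s+\bar n-1$ and dimension $d$; also, the alignment issue you flag is moot, since the cut-set bound already carries the factor $\e'$ and independent per-level downloads suffice, exactly as you note in your fallback.
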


\begin{proof}
Clearly, $h(x)=x^u$ is a good polynomial satisfying
 $h(\alpha)=\lambda_{i,w}^u$ for $\alpha\in A_{i}^{(w)}$, $i\in[\bar{n}]$ and $w\in[0,l-1]$. Denote $y^{(w)}_{i}=\lambda_{i,w}^u$. Since $0<r<u$, by Theorem \ref{thm:rack_rs}, $\mathcal{C}$ is an $(n,r\bar{k};l=\bar{s}^{\bar{n}})$ locally repairable array code.

Let the original codeword $C={\bm f}({\bm \lambda}_{ij})$ determined by $\bm f=(f^{(0)}(x),f^{(1)}(x),\cdots,f^{(l-1)}(x))^\top$
with $f^{(w)}\in \Psi^{(w)}(\bar{k},u,r)$ for $w\in [0,l-1]$.
Thus, according to Theorem~\ref{thm:rack_rs}, for any $w\in[0,l-1]$, denote by the residual polynomials $f^{(w)}_i$ with $\deg(f^{(w)}_i)<r$,
i.e.,
$
f^{(w)}_{i}(x)=\sum_{j=0}^{r-1}e^{(w)}_{i,j}x^j,\,\,\, i\in[\bar{n}],
$
then $(e^{(w)}_{1,j},e^{(w)}_{2,j},\cdots,e^{(w)}_{\bar{n},j})$
forms a codeword of an $[\bar{n},\bar{k}]$ RS code with evaluation points $\bar{A}^{(w)}=\{\lambda_{i,w}^u:i\in[\bar{n}]\}$  for
any $j\in[0,r-1]$.
 Therefore, one can deduce the following parity check equations:
\begin{equation}\label{eq: pc1}
\sum_{i=1}^{\bar{n}}{\lambda_{i,w}^{u{t}}}e^{(w)}_{i,j}=\sum_{i=1}^{\bar{n}}\lambda^{((i-1)\bar{s}+w_{i})ut}e^{(w)}_{i,j}=0
\end{equation}
for $t\in[0,\bar{n}-\bar{k}-1]$, $w\in[0,l-1]$ and $j\in[0,r-1]$.

 Consider the repair of $\e_{\tau}$ failures in $\tau$-th rack. If $\e_{\tau}\leq u-r$, the failures can be recovered from the locality without counting into the repair bandwidth, thus we discuss the case that $u-r< \e_{\tau}\leq u$. By Step $2$ in \textbf{Repair procedure of \eqref{eq: rack_lrc}}, we have 
$
R_1=R_2=\cdots=R_{\e_{\tau}}=\{\tau\},
$
and our target is to repair coefficients
\begin{equation}\label{eq: target_coefficients}
\{e^{(w)}_{\tau,r-t}:t\in[\e_{\tau}']\},\,\,w\in[0,l-1],
\end{equation}
where $\e_{\tau}'=\e_{\tau}-u+r.$

For $j\in[r-\e'_{\tau},r-1],$ summing the equations~\eqref{eq: pc1} on $w_{\tau}\in[0,\bar{s}-1].$ Then, for $t\in[0,\bar{n}-\bar{k}-1]$, we can obtain equations
\begin{equation}\label{eq: pc2}
\sum_{w_{\tau}=0}^{\bar{s}-1}\lambda^{((\tau-1)\bar{s}+w_{\tau})ut}e^{(w)}_{\tau,j}=-\sum_{i\neq\tau}\lambda^{((i-1)\bar{s}+w_{i})ut}\sum_{w_{\tau}=0}^{\bar{s}-1}e^{(w)}_{i,j},
\end{equation}
which define a GRS code of length $\bar{s}+\bar{n}-1$ and dimension $\bar{s}+\bar{n}-1-(\bar{n}-\bar{k})={d}$, {where
$w_i$ denotes the $i$-th coordinate of the $\bar{s}$-ary expansion of $w$ and $w\in [0,\bar{s}^{\bar{n}}-1]$.} Hence, any ${d}$ helper racks suffice to determine the coefficients in \eqref{eq: target_coefficients}, thereby
 repairing the residue polynomial $\bm{f}_{\tau}(\bm{X})$ with the help of the reminder $u-\e_{\tau}$ surviving nodes in $\tau$-th rack. As a consequence, the $\e_{\tau}$ failures in rack $\tau$ can be recovered.

From equations \eqref{eq: pc2}, for each $j\in[r-\e'_{\tau},r-1]$, one needs to download $l/\bar{s}$ symbols from each helper rack. Thus, the total repair bandwidth is
\begin{equation*}
b=\frac{\e'_{\tau}{d}l}{\bar{s}}=\frac{(\e_{\tau}-u+r){d}l}{{d}-\bar{k}+1},
\end{equation*}
which meets the cut-set bound of Theorem~\ref{thm: cut-set bound}.
\end{proof}

\end{document}